\title{Accrual valuation and mark to market adjustment}
\author{Alexey Bakshaev\\alex.bakshaev@gmail.com}
\begin{document}
\lstset{language=R, framexleftmargin=5mm, frame=shadowbox, rulesepcolor=\color{gray}}

\maketitle

\begin{abstract} 
This paper provides intuition on the relationship of accrual and mark-to-market valuation for cash and forward interest rate trades. Discounted cash flow valuation is compared to spread-based valuation for forward trades, which explains the trader's view on valuation. This is followed by Taylor series approximation for cash trades, uncovering simple intuition behind accrual valuation and mark-to-market adjustment. It is followed by the PNL example modelled in R. Within the Taylor approximation framework, theta and delta are explained. The concept of deferral is explained taking Forward Rate Agreement (FRA) as an example.
\end{abstract}

\section{Introduction}
\subsection{Types of valuation}
Financial accounting requires that valuation of a trade must at least reflect trade economics. At best, the market value is either known and can be applied directly, or can be derived via certain proxies like comparables or hedges that replicate the trade. The most basic way to value an interest flow receivable in the future is to linearly accrue its value over the waiting time. Consider 1 of notional we lend for a T amount of time at a fixed interest rate of r. We will denote the year fraction since the start of the trade until time t as $\Delta_{0t}$, e.g. $\frac{t}{360}$. Likewise, the year fraction from the start of the trade until the interest payment at time T is $\Delta_{0T}$. We will accrue interest linearlly. Given that the payment amount is $r\Delta_{0T}$ and the linear portion of the period to date t is $\frac{\Delta_{0t}}{\Delta_{0T}}$, the accrued portion of interest may be written as $\frac{\Delta_{0t}}{\Delta_{0T}} *  r\Delta_{0T}$ or simply $r\Delta_{0t}$. The accrual-based PV in that case is:\\
\begin{equation}
PV_{accr} = 1 +  r\Delta_{0t}
\end{equation}

Since market conditions at the moment of pricing will be different to those when the trade was made, the market will price the trade differently to accrual-based valuation. Let's assume that we do discounting with known market rates.  The discounting factor for time t as of $t_0=0$ can be written as (2) where $z_{0,t}$ is the market zero rate.
\begin{equation}
DF_{0t}=\frac{1}{1+z_{0t}\Delta_{0t}}
\end{equation}

\subsection{Cash trades and forward trades}
\subsubsection{Forwards}
Forward trades are trades that start in the future. Continuing with the example above, let's assume that we lend 1 of notional on some future date $t_1$ and get it back with fixed rate rebate r on another future date $t_2$ $(t_0 = 0 < t_1 < t_2)$. We will drop t from notation for the sake of brevity, so the interest paid on date $t_2$ (accrued over the period $t_1$ to $t_2$) is written as  $r\Delta_{1,2}$. In this case, discounted cash flow valuation may be written as:
\begin{equation}
PV_{DCF}=-1*DF_{0,1} + (1 + r\Delta_{1,2})*DF_{0,2}
\end{equation}

Extending (1) for a forward-starting case, we can define accrual-based valuation as:
\begin{equation}
PV_{accr} = 
\begin{cases}
  \phantom{-}0 & \text{if}\ t<t_1 \\
  1 +  r\Delta_{t_1,t}          & \text{if}\ t_1<t<t_2 \\ 
  \phantom{-}0 & \text{if}\ t>=t_2 \\
\end{cases} 
\end{equation}
where $\Delta_{t_1,t} = \frac{t-t_1}{360}$ and $r\Delta_{t_1,t}$ is the accrued amount.\\
This piecewise definition of accrual-based valuation follows from a simple economic principle: if the trade starts in the future ($t_0 < t_1$), we have not yet started to accrue interest, hence, accrual valuation will be giving 0 accrued value by definition. This means that until the accrual period starts, accrual-based valuation will yield zero PNL and flat valuation at $PV_{accr} = 1$. However, this is not the case with the DCF-based apporach. It will be subject to time decay (theta) and  revaluation at market rates (delta) from the trade booking date. It will be explained more clearly in due course of this paper. Another important observation is that accrual-based valuation does not depend on market rates: it is driven only by trade parameters (accrual period length and rate) and the relative position of valuation date with respect to accrual period.\\
Interest is assumed to be paid on trade maturity date, so accrual goes to 0 at this moment. Since the principal is also paid at maturity, entire accrual valuation (1) will go to 0 at and past the maturity date. This is equivalent to transforming a security asset to cash, so we don't have the security on the balance anymore, just cash. DCF valuation also goes to zero since we assume that the discounting factor on the maturity date is 0 ($DF_{t=T}=0$). Below is the depiction of the accrued amount in time.\\

\begin{tikzpicture}
\draw[->] (0,0) -- (6,0) node[anchor=north] {$Time$};
\draw	(0,0) node[anchor=north] {0}
		(2,0) node[anchor=north] {1}
		(4,0) node[anchor=north] {2};
\draw	(1,3.5) node{{\scriptsize Forward trade}}
		(4,3.5) node{{\scriptsize Cash trade}};

\draw[->] (0,0) -- (0,4) node[anchor=east] {$Accrual$};
\draw (-1,2) node {$r\Delta_{12}$};
\draw[dotted] (2,0) -- (2,4);

\draw[thick] (0,0) -- (2,0) -- (4, 2) -- (4,0);


\end{tikzpicture}

\subsubsection{"Cash" trades}
"Cash" trades are the trades that are already started ($t_1<=t$). Accrual-based valuation is in line with (1) and (4). We may re-write DCF valuation as:
\begin{equation}
PV_{DCF}= (1 + r\Delta_{1,2})*DF_{t,2}
\end{equation}
\newtheorem{thm}{Theorem}
\newtheorem{lem}[thm]{Lemma}
\newtheorem{example}{Numerical example}

\section{Forwards valuation}
Following the fixed rate example above, let's prove a simple lemma that introduces the notion of equivalence of DCF and spread-based valuations. LHS is a DCF valuation formula (3), RHS is a spread-based valuation formula $PV_{spread-based}=[r - z_{12}]*\Delta_{1,2}*DF_{0,2}$
\begin{lem}
DCF valuation and spread-based valuations are equivalent:
\begin{equation}
-1*DF_{0,1} + (1 + r\Delta_{1,2})*DF_{0,2} \equiv [r - z_{12}]*\Delta_{1,2}*DF_{0,2}
\end{equation}
\end{lem}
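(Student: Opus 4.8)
The plan is to eliminate $DF_{0,1}$ from the left-hand side by means of a no-arbitrage relation between the spot discount factors and the forward zero rate $z_{12}$, after which the claimed identity collapses to a one-line cancellation. First I would make explicit the standing convention that $z_{12}$ denotes the \emph{forward} zero rate over the accrual window $[t_1,t_2]$, so that the one-period forward discount factor takes the form $DF_{1,2}=1/(1+z_{12}\Delta_{1,2})$ in exact analogy with definition (2). Combined with the multiplicative consistency of discount factors, $DF_{0,2}=DF_{0,1}\cdot DF_{1,2}$, this gives
\[
DF_{0,1}=(1+z_{12}\Delta_{1,2})\,DF_{0,2}.
\]

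Next I would substitute this into the left-hand side of the stated equivalence. The term $-DF_{0,1}$ becomes $-(1+z_{12}\Delta_{1,2})\,DF_{0,2}$; collecting the two contributions proportional to $DF_{0,2}$ leaves the bracket $(1+r\Delta_{1,2})-(1+z_{12}\Delta_{1,2})$, which simplifies to $(r-z_{12})\Delta_{1,2}$, and multiplying back by $DF_{0,2}$ reproduces the right-hand side verbatim. No Taylor expansion or approximation enters, so the equivalence holds exactly, consistent with the $\equiv$ sign in the statement.

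The only genuine subtlety — and the step I would flag as the crux — is the justification of the relation $DF_{0,1}=(1+z_{12}\Delta_{1,2})\,DF_{0,2}$ itself, i.e. that the rate appearing in the spread formula must be the market-implied forward rate rather than an independent input. Economically this is the statement that a forward loan over $[t_1,t_2]$ is replicated by borrowing cash to $t_1$ and lending it out to $t_2$; were $z_{12}$ instead read as, say, a spot rate to $t_2$, the identity would break. I would therefore state this convention at the very start of the proof and remark that it is the natural forward-rate counterpart of (2); everything after that is routine algebra.
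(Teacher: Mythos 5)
Your proposal is correct and follows essentially the same route as the paper: both define $z_{12}$ as the forward rate so that $DF_{0,1}=(1+z_{12}\Delta_{1,2})DF_{0,2}$ via the consistency relation $DF_{1,2}=DF_{0,2}/DF_{0,1}$, substitute into the DCF formula, and cancel. Your explicit remark that the identity is exact and hinges on $z_{12}$ being the market-implied forward rate (not an independent input) is a welcome clarification, but the argument itself is the paper's.
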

\begin{proof}
Let $z_{12}$ be the forward rate from the start date to maturity. By definition:\\
\begin{equation}
DF_{1,2}=\frac{1}{1+z_{12}\Delta_{12}}
\end{equation}
At the same time, $DF_{1,2} = \frac{DF_{0,2}}{DF_{0,1}}$. From this follows that $1+z_{12}\Delta_{12} =  \frac{DF_{0,1}}{DF_{0,2}}$. Substituting $DF_{0,1}$ in (3) we get:\\
\begin{equation}
PV_{DCF}= -1 * (1+z_{12}\Delta_{12}) * DF_{0,2} + [1+r_{12}\Delta_{12}] * DF_{0,2} = [r - z_{12}]*\Delta_{1,2}*DF_{0,2}
\end{equation}
\end{proof}
\subsection{Taylor approximation}
Taylor approximation for $x_0 = 0$ can be written as:
\begin{equation}
\frac{1}{1+x}\sim1-x+o(x)
\end{equation}\\
Hence, the discounting factor may be written as:
\begin{equation}
DF_{0t}=\frac{1}{1+z_{0t}\Delta_{0t}} \sim 1 - z_{0t}\Delta_{0t} + o(z_{0t}\Delta_{0t})
\end{equation}\\
Applying Taylor approximation for the RHS of (6) we get:
\begin{equation}
PV_{spread-based}=[r - z_{12}]*\Delta_{1,2}*DF_{0,2} \sim [r - z_{12}]*\Delta_{1,2} + o(z\Delta)
\end{equation}\\
\begin{example}
Trader's view of forward pricing
\end{example}
Consider a tomorrow-next (TN) interest at maturity trade with the nominal of 1000000 done at 500bps. Let the TN market rate be 300bps, ON market rate be 200 bps. What is the value?\\
We would use Taylor-approximation of the spead based valuation that we explained in (11):\\
 \begin{equation}
PV_{spread-based}\sim [r - z_{12}]*\Delta_{1,2}
\end{equation}\\
Spread-based (following (11)): $1000000 * (0.05 - 0.03) * 1 / 360 = 55.55$.\\
Let's now see how far it is from the actual DCF valuation. $DF_{0,1}=1/(1+0.02*1/360)=0.999944448$, $DF_{1,2}=1/(1+0.03*1/360)=0.999916674$, so $DF_{0,2}=DF_{0,1}*DF_{1,2}=0.999861126$. Using (3) we get: 
\begin{equation}
PV_{DCF}=-1000*0.999944448 + 1000 * (1 + 0.05*1/360) * 0.999861126 = 55.5478
\end{equation}
We see that it is in good coherence with Taylor approximation.

\section{Relationship of accrual based and DCF valuation}
Let's now see what happens with DCF valuation when the trade stops being forward and becomes "cash" so we begin to accrue interest. We will continue with the same fixed rate one-period trade example.
\begin{lem}
Within the first-order Taylor series approximation, DCF valuation may be presented as the sum of the accrual-based PV and the mark-to-market adjustment.
\begin{equation}
PV_{DCF}= (1 + r\Delta_{0,T})*DF_{t,T}  \sim 1 + r\Delta_{0,t} + (r-z_{t,T})\Delta_{t,T} + o(z)
\end{equation}
\end{lem}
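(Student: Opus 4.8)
The plan is to obtain the claim as a direct corollary of the single-discount-factor Taylor expansion used in (9)--(11), together with the additivity of the year fraction. First I would fix notation: for a ``cash'' trade the accrual period runs from $t_1=0$ to $t_2=T$, the valuation date satisfies $0<t<T$, and (5) reads $PV_{DCF} = (1 + r\Delta_{0,T})\,DF_{t,T}$ with $DF_{t,T} = \frac{1}{1 + z_{t,T}\Delta_{t,T}}$ by the definition of the (forward) zero rate over $[t,T]$, exactly as in (7)--(8) but with discounting now measured from the current date $t$ rather than from $0$.

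Second, I would expand $DF_{t,T}$ to first order via (9)--(10) with $x = z_{t,T}\Delta_{t,T}$, substitute into (5), and multiply out:
\begin{equation}
PV_{DCF} \sim (1 + r\Delta_{0,T})\bigl(1 - z_{t,T}\Delta_{t,T}\bigr) = 1 + r\Delta_{0,T} - z_{t,T}\Delta_{t,T} - r\,\Delta_{0,T}\,z_{t,T}\,\Delta_{t,T}.
\end{equation}
The final term is a product of two ``rate$\,\times\,$time'' quantities (one factor $r\Delta_{0,T}$, one factor $z_{t,T}\Delta_{t,T}$), hence it is of the same higher order as the Taylor remainder already discarded, and I would absorb it into $o(z)$. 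This leaves $PV_{DCF} \sim 1 + r\Delta_{0,T} - z_{t,T}\Delta_{t,T} + o(z)$.

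Third, I would use $\Delta_{0,T} = \Delta_{0,t} + \Delta_{t,T}$ — year fractions over adjacent periods add, e.g. $\frac{T}{360} = \frac{t}{360} + \frac{T-t}{360}$ — to split $r\Delta_{0,T} = r\Delta_{0,t} + r\Delta_{t,T}$ and regroup:
\begin{equation}
PV_{DCF} \sim 1 + r\Delta_{0,t} + r\Delta_{t,T} - z_{t,T}\Delta_{t,T} + o(z) = 1 + r\Delta_{0,t} + (r - z_{t,T})\Delta_{t,T} + o(z),
\end{equation}
which is the claimed approximation. As a sanity check one sees that $1 + r\Delta_{0,t}$ is precisely the accrual-based PV of (1) (the branch $t_1<t<t_2$ of (4)), while $(r - z_{t,T})\Delta_{t,T}$ is the mark-to-market adjustment — the same spread-over-remaining-life term as in (11), now measured from the valuation date $t$ to maturity $T$.

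The computation itself is essentially a one-line substitution; the only point requiring care — and the step I would dwell on — is the order bookkeeping: making precise, consistently with the use of $o(x)$, $o(z\Delta)$ and $o(z)$ in (9)--(11), that both the Taylor remainder $o(z_{t,T}\Delta_{t,T})$ and the cross term $r\Delta_{0,T}z_{t,T}\Delta_{t,T}$ are legitimately of order $o(z)$, i.e. that the expansion is performed in the regime of small rates where $r$, $z$ and the products $z\Delta$ are all first-order small and any product of two such quantities is negligible. Everything else — the definition of $DF_{t,T}$ and the additivity of $\Delta$ — is exact.
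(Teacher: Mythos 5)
Your proof is correct and uses the same ingredients as the paper's — the first-order expansion $DF_{t,T}\sim 1-z_{t,T}\Delta_{t,T}$ from (9)--(10) and the additivity $\Delta_{0,T}=\Delta_{0,t}+\Delta_{t,T}$ — differing only in the order of operations: you expand the discount factor first and then regroup, whereas the paper first performs an exact algebraic separation of the accrual term $r\Delta_{0,t}$ from the residual fraction in (15) and only then Taylor-expands that fraction. Both routes discard the same second-order cross term ($r\Delta\cdot z\Delta$) and arrive at the identical decomposition, so this is essentially the paper's argument, carried out in a slightly more direct order.
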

\begin{proof}
Let's denote the valuation date as $t$, the maturity date as $T$, the trade rate as $r$ and the zero rate from t to T $z_{t,T}$ as $z$ .
\begin{equation}
\begin{split}
& \frac{1+r\Delta_{0,T}}{1+z\Delta_{t,T}}=\frac{1+r(\Delta_{0,t}+\Delta_{t,T})}{1+z\Delta_{t,T}}=\frac{r\Delta_{0,t}(1+z\Delta_{t,T})-r\Delta_{0,t}z\Delta_{t,T}+1+r\Delta_{t,T}}{1+z\Delta_{t,T}}\\
&=\underbrace{r\Delta_{0,t}}_{\text{Accrual}} + \underbrace{\frac{1+r\Delta_{t,T}-r\Delta_{0,t}z\Delta_{t,T}}{1+z\Delta_{t,T}}}_{\text{MtM Adjustment}}
\end{split}
\end{equation}
Let's now apply Taylor approximation (9) to the MtM component, removing the second-order components. Note that $rz\Delta_{0,t}\Delta_{t,T}$ is also a second-order component.
\begin{equation}
\begin{split}
& r\Delta_{0,t} + \frac{1+r\Delta_{t,T}-r\Delta_{0,t}z\Delta_{t,T}}{1+z\Delta_{t,T}} \sim r\Delta_{0,t} + (1 + r\Delta_{t,T} - rz\Delta_{0,t}\Delta_{t,T})(1-z\Delta_{t,T}) \sim \\
& r\Delta_{0,t} + 1 +  r\Delta_{t,T} - rz\Delta_{0,t}\Delta_{t,T} - z\Delta_{t,T} - rz(\Delta_{t,T})^2 + rz^2\Delta_{0,t}(\Delta_{t,T})^2 \sim \\ 
& \underbrace{1 + r\Delta_{0,t}}_{\text{$PV_{accr} (1)$}} + \underbrace{(r - z)\Delta_{t,T}}_{\text{MtM Adjustment (11)}} + o(z\Delta)
\end{split}
\end{equation}
\begin{equation}
MtMAdjustment \sim (r - z)\Delta_{t,T}
\end{equation}
\end{proof}
As shown above, DCF valuation is equivalent to the sum of accrual-based valuation (1) and mark-to-market adjustment (17). Note that the \textbf{\textit{spread-based valuation}} (11) applied from the valuation date $t$ until the expiration date $T$ here represents \textbf{\textit{mark-to-market adjustment}} (17) that brings the accrual-based valuation in line with the market rate $z$.\\
If the market rate $z$ is equal to the trade rate $r$, the MtM adjustment component cancels out. This draws some parallels to bond markets vocabulary. From the clean price perspective, the trade is priced at par in this case. From the dirty price perspective, the trade is priced at par plus accrued. Once the trade rate is different to the market rate, the clean price becomes affected by the mark-to-market adjustment.\\
\section{Greeks}
Based on (16) let's derive the basic Greeks that define the PNL. Recollect that $\Delta_{0,t}=\frac{t-t_0}{360}$, $\Delta_{t,T}=\frac{T-t}{360}$\\
\begin{equation}
\begin{split}
& \vartheta = \frac{\partial PV}{\partial t} \sim \frac{r}{360} - \frac{r}{360} + \frac{z}{360} \sim \frac{z}{360}\\
& \varrho = \frac{\partial PV}{\partial z} \sim -\Delta_{t,T}
\end{split}
\end{equation}
\begin{itemize}
  \item $\vartheta$ ("theta") addresses sensitivity to time decay $t$. First-order dependency is on the market rate. It shows the market cost of carrying 1 day of notional over to the next day. We may as well re-write $\vartheta$ as the sum of accrued and mtm components:\\
  \begin{equation}
  \vartheta \sim \frac{z}{360} \sim \underbrace{\frac{r}{360}}_{\text{Accrued}} + \underbrace{(\frac{z}{360} - \frac{r}{360})}_{\text{MtM adjustment}}
  \end{equation} 
  \item $\varrho$ ("rho") addresses sensitivity to market rates $z$. We have an inverse relationship on the remaining duration of the trade. This means that the less time is left until expiry, the less sensitive is the price to changes in the market rate. The minus sign tells us that PV decreases as the market rate increases.
\end{itemize}
\subsection{Accrual valuation and MtM adjustment in R}
R is a scientific language popular in many areas. One of its advantages is the easiness of working with vectors. We will populate an array of increasing daycount in the $days$ object, decreasing daycount in the $daysRemaining$ object and will then calculate:\\
 \begin{tabbing}
\textbf{\textit{discFact}} \= - \= a vector of daily discounting factors for each day in $daysRemaining$ object\\
\textbf{\textit{accrued}} \= - \= a vector of daily accrued amounts\\
\textbf{\textit{PV}} \= - \= a vector of DCF-based PVs (one PV per day)\\
\textbf{\textit{mtmAdj}} \= - \=  a vector of mark-to-market adjustments $(11)$\\
\textbf{\textit{PVTaylor}} \= - \=  a vector of Taylor-approximated PVs, see $(16)$\\
\textbf{\textit{unexplained}} \= - \= a vector of the difference between DCF PV and Taylor-approx PV\\
 \end{tabbing}
We will assume the market rate of 700 bps which is intentionally higher than the trade rate of 500 bps (1M of notional for 10 days) to get the negative mtm adjustment (it is more instructive this way). This market rate of 700 bps will be constant each day for the sake of simplicity. Below is the R code that calculates valuation for the above.\\
\begin{lstlisting}
days <- seq(1,10)
daysRemaining <- seq (9, 0, -1)
accrualFractions <- days / 360
fractionsRemaining <- daysRemaining / 360
discFact <- 1/(1+ 0.07 * daysRemaining / 360)
accrued <- 1000000 * 0.05 * accrualFractions
PV <- 1000000 * (1 + 0.05 * 10/360) * discFact
mtmAdj <- 1000000 * (0.05 - 0.07) * fractionsRemaining
PV_Taylor = 1000000 + accrued + mtmAdj
unexplained <- PV - PV_Taylor
PVs <- cbind(days, PV, accrued, mtmAdj, PV_Taylor, unexplained)
PVs
\end{lstlisting}
Executing this code in the R environment yields the following PNL simulation:\\
\includegraphics{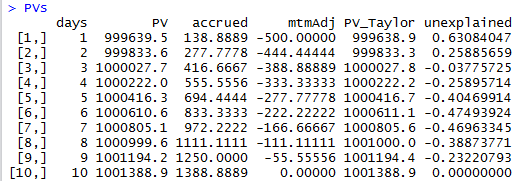}
\\
This example provides some nice intuition:
\begin{itemize}
  \item Accrual linearly increasing which reflects the coupon amount being linearly recognized
  \item Negative MtM adjustment reflects the fact that the market rate of 700bps is higher than the trade rate of 500 bps. It decreases with time as the valuation date $t$ gets closer to the expiration date $T$, since the mtm component is scaled by the duration multiple of $\Delta_{t,T}$. This is in line with sensitivity $\varrho$ (duration) to the market rate $z$ which is exactly that: $ -\Delta_{t,T}$
  \item Since the market rate is not changing, there is no $\varrho$ attribution. The entire daily PnL is explained by time decay $\vartheta \sim 1000000 * 0.07 / 360 \sim 194.44$ (18). This highlights a very important point. While PNL attribution is purely theta-driven, it might be split into the accrual component and the mark-to-market component which depend on the trade rate and the curve rate, respectively. Following (19) we get the following theta components:\\
  \begin{equation}
  \begin{split}
  & {\vartheta}_{accr} \sim \frac{r}{360} \sim 1000000 * 0.05 / 360 \sim 138.89 \\
  & {\vartheta}_{MtM} \sim \frac{z - r}{360} \sim 1000000 * (0.07 - 0.05) / 360 \sim 55.56
  \end{split}
  \end{equation}
      
\end{itemize}

\section {Generalizaions}
\subsection {Floating rates}
Let's now assume the valuation of a single-period forward-starting floating trade. In the formula (8) we need to replace the fixed trade rate $r$ with a floating rate $f$ and a fixed spread $s$. Taylor approximation will be written as:\\
\begin{equation}
PV_{DCF}= [f_{12} + s - z_{12}]*\Delta_{1,2}*DF_{0,2} \sim [f_{12} + s - z_{12}]*\Delta_{1,2}
\end{equation}
If forward rates $f_{12}$ and $z_{12}$ are forecasted from the same yield curve (e.g. LIBOR) they will cancel out and valuation is approximated as simply as 
\begin{equation}
PV_{DCF}=s\Delta_{1,2}
\end{equation}
In the case of zero spread $s=0$, so the valuation of such forward period is 0. This is easily understood, as we both forecast and discount over the period with the same rate. The amount we pay at time 1 (notional amount) will be equal to the discounted (to time 1) value of the flow we receive at point 2, so $-1 + \frac{1 + z_{12}\Delta_{1,2}}{1+z_{12}\Delta_{1,2}}$=0.\\
In the case of LIBOR forecasting and OIS discounting we will get PV depending on LIBOR to OIS spread $f_{12}-z_{12}$\\
Once the rate is fixed $f_{12}=r$, valuation will be equivalent to (14).

\subsection {Applicable products}
Intuition on accrual valuation and mark-to-market adjustment provided in this paper applies to a wide range of interest rate products such as FRA, repo (considering just the cash side of the transaction), interest at maturity, as well as bonds and interest rate swaps. An extreme case are the products where we assume zero duration, e.g. call accounts. For those, the most conservative valuation would be purely accrual-based (if we assume that the money can be withdrawn on the valuation date) with no mark-to-market adjustment.\\
Another extreme case is forward rate agreements (FRAs). Those are forward trades that pay at the beginning of the period. While they stay forward, valuation is driven by the mark-to-market adjustment. However, once a FRA pays, the MtM adjustment component disappears entirelly. Since we received the interest at the beginning of the period, we need to defer it. The explaination of that is provided in the next paragraph.

\subsection {Forward Rate Agreements}
Forward rate agreement (FRA) is a financial contract where the lender (seller) agrees to receive the fixed "contractual" interest rate $r$ in the future (a fixed rate is locked-in now) in exchange for a floating payment in the future (becomes known at the future point $t_1$). A special property of FRA is that it pays at the beginning of the forward period. Let's again denote the beginning of the forward period $t_1$ as 1 and the end $t_2$ as 2. For the sake of simplicity, let's assume that both the fixing and the payment happen at $t_1$. Since the payment is at $t_1$, there is a need to lock-in the discounting rate from 2 to 1. The convention is to assume that it is the same forward rate $f_{12}$ that the period is forecasted (and later fixed) with. Thus, PV of a FRA at point $t_1$ is:
\begin{equation}
PV^{t1}_{FRA} = \frac{[r - f_{12}]*\Delta_{1,2}}{1 +f_{12}\Delta_{1,2}}
\end{equation}
Discounting to time 0 we get:
\begin{equation}
PV_{FRA} = \frac{[r - f_{12}]*\Delta_{1,2}}{1 +f_{12}\Delta_{1,2}}*DF_{0,1}
\end{equation}
Applying Taylor approximation (9) to this formula gives us the same approximation as in (12), which tells us that FRA valuation is a purely mark-to-market adjustment - driven (17):
\begin{equation}
PV_{FRA} \sim [r - f_{12}]*\Delta_{1,2}
\end{equation}

\subsection {Notion of NPV}
To better understand the concept of deferral, we need to introduce the notion of NPV first. NPV is the sum of the outstanding cash position and the present value of the trade. The graph below shows what accrual-based PV, Cash position and NPV look like for the one-period example that we used in (4).\\
\begin{tikzpicture}
\draw[->] (0,0) -- (6,0) node[anchor=north] {$Time$};
\draw	(0,0) node[anchor=north] {0}
(2,0) node[anchor=north] {1}
(4,0) node[anchor=north] {2};

\draw[->] (0,0) -- (0,3.3) node[anchor=east] {$PV_{accr}$};
\draw (-0.2,1) node {1};
\draw (-1.5,1.5) node {1 + $r\Delta_{12}$};

\draw[thick] (0,0)-- (2,0) -- (2,1) -- (4, 1.5) -- (4,0);
\end{tikzpicture}

\begin{tikzpicture}
\draw[->] (0,0) -- (6,0) node[anchor=north] {$Time$};
\draw	(-0.2,0) node[anchor=north] {0}
(1.8,0) node[anchor=north] {1}
(3.8,0) node[anchor=north] {2};

\draw[->] (0,-2) -- (0,1.5) node[anchor=east] {$CashPos$};
\draw (-0.2,-1) node {-1};
\draw (-1.3,0.5) node {$r\Delta_{12}$};

\draw[thick] (0,0)-- (2,0) -- (2,-1) -- (4, -1) -- (4,0.5) -- (5,0.5);
\end{tikzpicture}

\begin{tikzpicture}
\draw[->] (0,0) -- (6,0) node[anchor=north] {$Time$};
\draw	(-0.2,0) node[anchor=north] {0}
(2,0) node[anchor=north] {1}
(4,0) node[anchor=north] {2};

\draw[->] (0,0) -- (0,2) node[anchor=east] {$NPV$};
\draw (-1.3,0.5) node {$r\Delta_{12}$};

\draw[thick] (0,0)-- (2,0) -- (4,0.5) -- (5,0.5);
\end{tikzpicture}\\
As can be seen, NPV gives the notion of linarly increasing interest income that remains constant after the coupon payment date is achived.

\subsection {Deferral}
With FRA we would like to linearly recognize the income over the FRA period. This means that we need to offset the interest payment received at the beginning of the period with the deferral amount so that the net of those would equate to linearly increasing accrual. We can define deferral as:
\begin{equation}
Deferral = 
\begin{cases}
\phantom{-}0 & \text{if}\ t<t_1 \\
-r\Delta_{t,T}          & \text{if}\ t_1<t<t_2 \\ 
\phantom{-}0 & \text{if}\ t>=t_2 \\
\end{cases} 
\end{equation}
where $-r\Delta_{t,T}$ is $-r\frac{T-t}{360}$.\\
It can be seen that the sum of the coupon and deferral would form the profile equivalent to linear accrual. 
\begin{equation}
Deferral + Coupon = Accrual
\end{equation}
The graph below depicts that.\\
\begin{tikzpicture}
\draw[->] (0,0) -- (4,0) node[anchor=north] {$Time$};
\draw	(-0.2,0) node[anchor=north] {1}
(2,0) node[anchor=north] {2};

\draw[->] (0,-1) -- (0,1) node[anchor=east] {$Deferral$};
\draw (-1,-0.5) node {-$r\Delta_{12}$};

\draw[thick] (0,-0.5)-- (2,0) -- (4,0);
\end{tikzpicture}
\begin{tikzpicture}
\draw[->] (0,0) -- (4,0) node[anchor=north] {$Time$};
\draw	(-0.2,0) node[anchor=north] {1}
(2,0) node[anchor=north] {2};

\draw[->] (0,-1) -- (0,1) node[anchor=east] {$CashPos$};
\draw (-1,0.5) node {$r\Delta_{12}$};

\draw[thick] (0,0.5)-- (2,0.5) -- (4,0.5);
\end{tikzpicture}
\\
\begin{tikzpicture}
\draw[->] (0,0) -- (4,0) node[anchor=north] {$Time$};
\draw	(-0.2,0) node[anchor=north] {1}
(2,0) node[anchor=north] {2};

\draw[->] (0,-1) -- (0,1) node[anchor=east] {$Income$};
\draw (-1,0.5) node {$r\Delta_{12}$};

\draw[thick] (0,0)-- (2,0.5) -- (4,0.5);
\end{tikzpicture}\\
\section {Conclusion}
Using Taylor approximation helps to bridge the gap between accrual and mark-to-market valuation. We can see that PV may be split into accrual and mark-to-market adjustment components. Similarly, theta can be viewed as the sum of the accrued and MtM component.\\
We see that the forward valuation formula (6) turns into the mark-to-market adjustment for trades that began to accrue interest. The less time remains until trade expiration, the less effect mark-to-market effect has on valuation. This makes sense since the risk (duration) decreases with time to maturity.

\section {Literature}
\begin{itemize}
\item {Options, futures, and other derivatives / John C. Hull}
\item {Fixed Income Securities: Tools for Today's Markets 3rd Edition / Bruce Tuckman, Angel Serrat}
\end{itemize}
\end{document}